\begin{document}

\begin{verbatim}\end{verbatim}\vspace{2.5cm}

\begin{frontmatter}

\title{The tessellation problem of quantum walks$^\star$}
\thanks{Email: {\texttt {\normalshape
        \{santiago, lfignacio, celina, franklin,posner\}@cos.ufrj.br}},
        {\texttt {\normalshape\{tharsodf, portugal\}@lncc.br} },
        {\texttt {\normalshape
        luis@ic.uff.br}}. We acknowledge support from CAPES, CNPq and FAPERJ.}

\author[pesc]{A. Abreu},
\author[pesc]{L. Cunha},
\author[lncc,ufes]{T. Fernandes},
\author[pesc]{C. de Figueiredo},
\author[uff]{L. Kowada},
\author[pesc]{F. Marquezino},
\author[pesc]{D. Posner},
\author[lncc]{R. Portugal}  


\address[pesc]{Universidade Federal do Rio de Janeiro, Brazil}
\address[lncc]{Laborat\'orio Nacional de Computa\c{c}\~ao Cient\'{\i}fica, Brazil}
\address[uff]{Universidade Federal Fluminense, Brazil}
\address[ufes]{Universidade Federal do Esp\'{\i}rito Santo, Brazil}



\begin{abstract}
Quantum walks have received a great deal of attention recently because they can be used to develop new quantum algorithms and to simulate interesting quantum systems. In this work, we focus on a model called staggered quantum walk, which employs advanced ideas of graph theory and has the advantage of including the most important instances of other discrete-time models. The evolution operator of the staggered model is obtained from a tessellation cover, which is defined in terms of a set of partitions of the graph into cliques. It is important to establish 
the minimum number of tessellations required in a tessellation cover,
and what classes of graphs admit a small number of tessellations. We describe two main results: (1)~infinite classes of graphs where we relate the chromatic number of the clique graph to the minimum number of tessellations required in a tessellation cover, and (2)~the problem of deciding whether a graph is $k$-tessellable for $k\ge 3$ is NP-complete.
\end{abstract}


\begin{keyword}
staggered quantum walk, clique graph, tessellation, NP-completeness
\end{keyword}

\end{frontmatter}

\newpage

\section{Introduction}
\vspace{-0.3cm}
Random walks play an important role in computer science mainly in the area of algorithms and it is expected that quantum walks, which is the quantum counterpart of random walks, will play at least a similar role in quantum computation. In fact, in the last decades there has been much interest in the area of quantum walks and they are considered one of the main techniques to build algorithms for quantum computers~\cite{Ven12}. 

Quantum walks on graphs come in two flavors: continuous- and discrete-time. There are more than one model for discrete-time quantum walks. The most known ones are the coined~\cite{Aharonov:2000} and Szegedys's model~\cite{Szegedy:2004}. Recently, a new model called staggered quantum walk~\cite{Por16b,PSFG16} was proposed, which in some sense is more general than the previous ones because the staggered model includes the entire Szegedy's and the most interesting instances of the coined model.

The staggered model on graphs is defined by an evolution operator that is a product of unitary matrices obtained from the graph by a tessellation process. A tessellation is a partition of the graph into cliques so that the union of the cliques covers the vertex set but not necessarily the edge set. There is a recipe to build a unitary and Hermitian matrix based on a chosen tessellation~\cite{Por16b,PSFG16}. To define the evolution operator of the quantum walk, one has to choose extra tessellations until the tessellation union covers the edge set.

The simplest evolution operators are the product of few unitary matrices, and at least two matrices  corresponding to 2-tessellable graphs are required. The class of 2-tessellable graphs was exhaustively studied in Ref.~\cite{Por16b}, which showed that a graph is 2-tessellable if and only if its clique graph is 2-colorable. Clique graphs~\cite{Szw03} play a central role in the tessellation problem. Graphs whose clique graphs require $k>2$ colors to be vertex-colored cannot be covered by just two tessellatons, for example, the stars $S_k$.


An important concept in this context is the minimum number of tessellations required to cover the graph, called the tessellation number. An upper bound is easily established by using the clique graph operator. In fact, we show that the chromatic number of the clique graph is a tight upper bound for the tessellation number. On the other hand, we obtain a class of 3-tessellable graphs whose clique graphs have arbitrarily large chromatic number, establishing the tightness of the lower bound.

Another important concept is the NP-completeness character of the tessellation problem. We show that to determine whether a graph is $k$-tessellable for $k>2$ is NP-complete by reduction from the edge coloring problem of triangle free graphs with degree at most three~\cite{Kor97}. The tessellation problem is also directly related to the set cover problem~\cite{GareyJohnson}.



\subsubsection*{Preliminaries}

A \textbf{clique} is a subset of vertices of a graph such that its induced subgraph is complete. A clique of size $d$ is called a $d$-\textbf{clique}. In a \textbf{partition} of the graph into cliques, each element of the partition is a clique and two elements of the partition cannot have a vertex in common. 

\begin{definition}
A \textbf{tessellation} ${\mathcal{T}}$ is a partition of the graph into cliques, where each clique is called a \textbf{polygon} (or a \textbf{cell}), such that the union of the polygons covers the vertex set. An edge \textbf{belongs} to the tessellation if and only if both endpoints of the edge belong to the same polygon. The set of edges belonging to ${\mathcal{T}}$ is denoted by ${\mathcal{E}}({\mathcal{T}})$.
\end{definition}

\begin{definition}
A \textbf{tessellation cover} of size $k$ of a graph $\Gamma$ is a set of $k$~tessellations ${\mathcal{T}}_1,...,{\mathcal{T}}_k$ whose union $\cup_{i=1}^k\,{\mathcal{E}}({\mathcal{T}}_i)$ is the edge set of $\Gamma$. The \textbf{tessellation number} $T(\Gamma)$ is the cardinality of a smallest tessellation cover of~$\Gamma$. A graph~$\Gamma$ is called \textbf{$t$-tessellable} for an integer $t$ when $T(\Gamma)\le t$. The $t$-\textbf{tessellation} problem asks given a graph $\Gamma$ whether $\Gamma$ is $t$-tessellable. 
\end{definition}

As an example, we illustrate the above definitions using the \textbf{wheel graph}, which is the graph $W_{n}$ for $n>2$ with vertex set $\{0,1,2,...,$ $n\}$ and edge set $\{\{0,n\},\{1,n\},...,$ $\{n-1,n\},$ $\{0,1\},$ $\{1,2\},... ,\{n-2,n-1\},\{n-1,0\}\}$. Notice that $T(W_{6})=3$ because tessellations ${\mathcal{T}}_0=\{\{0\},\{3\},\{1,2\},\{4,5,6\}\}$, ${\mathcal{T}}_1=\{\{1\},\{4\},\{0,5\},\{2,3,6\}\}$, ${\mathcal{T}}_2=\{\{2\},\{5\},\{3,4\},\{0,1,6\}\}$ form a minimum tessellation cover. In fact, each of ${\mathcal{T}}_0, {\mathcal{T}}_1$, and ${\mathcal{T}}_2$ is a partition into cliques which covers all vertices. The set ${\mathcal{E}}({\mathcal{T}}_0)\cup{\mathcal{E}}({\mathcal{T}}_1)\cup{\mathcal{E}}({\mathcal{T}}_2)$ is equal to the edge set of $W_6$, and it is not possible to cover the edge set with less than three tessellations because a tessellation of $W_6$ can cover at most two edges incident to vertex $6$.

\begin{definition}
A \textbf{coloring} (resp. an \textbf{edge-coloring}) of a graph is a labeling of vertices (edges) with colors such that no two adjacent vertices (incident edges) have the same color. A $k$-\textbf{colorable} ($k$\textbf{-edge-colorable}) graph is the one whose vertices (edges) can be colored with at most $k$ colors so that no two adjacent vertices (edges) share the same color. The \textbf{chromatic number} $\chi(\Gamma)$ (\textbf{chromatic index} $\chi'(\Gamma)$) of a graph $\Gamma$ is the smallest number of colors needed to color the vertices (edges) of $\Gamma$.
\end{definition}

 \vspace{-0.3cm}
\section{Results}

\vspace{-0.4cm}
\subsubsection*{How far from $\chi(K(\Gamma))$ is $T(\Gamma)$?} 
In this subsection, we relate the tessellation number to some non-trivial classes of graphs and we prove the following proposition.


\begin{proposition} \label{prop:upper}
Let $\Gamma$ be a graph whose clique graph is not 2-colorable. Then, $3\le T(\Gamma) \le \chi\big(K(\Gamma)\big).$
\end{proposition}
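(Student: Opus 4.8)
The plan is to treat the two inequalities separately, since they rest on entirely different ingredients. The lower bound $3\le T(\Gamma)$ is immediate from the characterization recalled in the introduction: a graph is 2-tessellable if and only if its clique graph is 2-colorable. Since $K(\Gamma)$ is assumed not to be 2-colorable, $\Gamma$ cannot be 2-tessellable, which by the definition of the tessellation number means precisely that $T(\Gamma)\ge 3$. The finiteness of $T(\Gamma)$ is not an issue here, as it will be witnessed by the explicit cover produced for the upper bound.

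For the upper bound $T(\Gamma)\le\chi\big(K(\Gamma)\big)$, I would argue by an explicit construction that converts a proper coloring of the clique graph into a tessellation cover. Set $c=\chi\big(K(\Gamma)\big)$ and fix a proper $c$-coloring of $K(\Gamma)$, that is, an assignment of colors $1,\dots,c$ to the maximal cliques of $\Gamma$ such that any two maximal cliques sharing a vertex receive distinct colors. For each color $i$, let ${\mathcal{C}}_i$ be the family of maximal cliques colored $i$. The key observation is that the members of ${\mathcal{C}}_i$ are pairwise vertex-disjoint: two intersecting maximal cliques are adjacent in $K(\Gamma)$ and hence cannot share a color.

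To promote each ${\mathcal{C}}_i$ to a genuine tessellation ${\mathcal{T}}_i$, I would pad it with singletons: every vertex of $\Gamma$ not already covered by a clique of ${\mathcal{C}}_i$ is added as a $1$-clique. Because a single vertex is trivially a clique and the members of ${\mathcal{C}}_i$ are vertex-disjoint, the resulting ${\mathcal{T}}_i$ is a partition of the vertex set into cliques, hence a valid tessellation. It then remains to verify that $\bigcup_{i=1}^{c}{\mathcal{E}}({\mathcal{T}}_i)$ is the full edge set. Any edge $\{u,v\}$ of $\Gamma$ lies in some maximal clique $Q$, and if $Q$ has color $i$ then $Q\in{\mathcal{C}}_i\subseteq{\mathcal{T}}_i$, so both endpoints lie in the same polygon of ${\mathcal{T}}_i$ and the edge belongs to ${\mathcal{T}}_i$. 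Thus the $c$ tessellations cover every edge and $T(\Gamma)\le c=\chi\big(K(\Gamma)\big)$.

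I expect the substance of the argument to be the construction itself rather than any single hard step. The two points requiring care are the disjointness of each color class, which is exactly what a proper coloring of the intersection-based clique graph buys us, and the use of singleton polygons to complete each partition without contributing any spurious edges. The final edge-covering check reduces to the standard fact that every edge is contained in some maximal clique, which makes that verification routine.
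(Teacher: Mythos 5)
Your proposal is correct and follows essentially the same route as the paper's proof: the lower bound via the characterization that a graph is 2-tessellable if and only if its clique graph is 2-colorable, and the upper bound by converting each color class of an optimal coloring of $K(\Gamma)$ into a tessellation whose polygons are the pairwise disjoint maximal cliques of that class padded with singleton vertices. Your write-up merely spells out two details the paper leaves implicit (why color classes are vertex-disjoint, and the edge-covering check), which is fine.
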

\vspace{-0.4cm}\begin{proof} 
The lower bound $3\le T(\Gamma)$ is a direct consequence of the fact that a graph is 2-tessellable if and only if its clique graph is 2-colorable, proved in Ref.~\cite{Por16b}. Now we give a proof for the upper bound $T(\Gamma) \le \chi\big(K(\Gamma)\big)$. We define a family of $\chi(K(\Gamma))$ tessellations whose union covers the edges of graph $\Gamma$ as follows. Consider an optimal coloring of $K(\Gamma)$, and let $S_g$ be the set of maximal cliques corresponding to the vertices of $K(\Gamma)$ colored by color $g$. Any pair of such maximal cliques must be disjoint, so we can define a tessellation ${\mathcal{T}}_g$ whose polygons are the cliques of $S_g$ together with missing vertices of $\Gamma$. Since every edge of $\Gamma$ belongs to at least one maximal clique, the union of the $\chi(K(\Gamma))$ defined tessellations covers the edges of graph~$\Gamma$, as required in order to establish the upper bound.
\end{proof}

Now we show that both bounds of Proposition~\ref{prop:upper} are tight. Next Proposition shows that the upper bound is tight.

\begin{proposition}\label{prop:windmill}
 Let $\Gamma$ be the \textbf{windmill} graph with $\ell$ maximal cliques $C_1, C_2,\\ \ldots, C_{\ell}$ such that the intersection of $C_1, C_2, \ldots, C_{\ell}$ is precisely one vertex $u$. Then, $T(\Gamma) = \chi(\mathcal{K}(\Gamma)) = \ell$.
\end{proposition}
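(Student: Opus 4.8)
The plan is to establish the two equalities separately: first I would pin down the clique graph $K(\Gamma)$ to compute $\chi(K(\Gamma))=\ell$, and then I would squeeze $T(\Gamma)$ between $\ell$ and $\ell$, using Proposition~\ref{prop:upper} for the upper bound and a counting argument at the hub vertex $u$ for the lower bound.

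First I would determine $K(\Gamma)$ explicitly. In the windmill, two vertices lying in distinct petals $C_i$ and $C_j$ are non-adjacent unless one of them is $u$; hence each $C_i$ is a maximal clique and these are all the maximal cliques. Since every pair $C_i,C_j$ shares the vertex $u$, any two vertices of $K(\Gamma)$ are adjacent, so $K(\Gamma)=K_\ell$ and therefore $\chi(K(\Gamma))=\ell$. As $K_\ell$ is not $2$-colorable for $\ell\ge 3$, Proposition~\ref{prop:upper} applies and already gives the upper bound $T(\Gamma)\le\chi(K(\Gamma))=\ell$.

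The core of the argument is the lower bound $T(\Gamma)\ge\ell$, which I would obtain by focusing on the edges incident to the hub $u$. The key structural observation is that any clique of $\Gamma$ containing $u$ is contained in a single petal $C_i$: if a clique held $u$ together with vertices from two different petals, those petal vertices would have to be mutually adjacent, contradicting the windmill structure. Consequently, in any tessellation $\mathcal{T}$ the unique polygon containing $u$ lies inside one petal, so $\mathcal{T}$ can cover edges incident to $u$ within that one petal only.

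To finish, I would select one edge $e_i=\{u,v_i\}$ per petal, with $v_i\in C_i\setminus\{u\}$ (each petal contributes at least one such edge since every $C_i$ is a maximal clique of size at least two). By the previous observation, a single tessellation covers at most one of the $\ell$ edges $e_1,\dots,e_\ell$; hence any tessellation cover must use at least $\ell$ tessellations, giving $T(\Gamma)\ge\ell$. Combined with the upper bound, this yields $T(\Gamma)=\chi(K(\Gamma))=\ell$. The main obstacle is the lower bound, and within it the only delicate point is justifying that the polygon through $u$ cannot straddle two petals; everything else is routine once that fact is in place.
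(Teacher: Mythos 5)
Your proof is correct and follows essentially the same route as the paper: the upper bound via $K(\Gamma)=K_\ell$ and Proposition~\ref{prop:upper}, and the lower bound via the observation that the single polygon containing the hub $u$ lies inside one petal, so each tessellation covers edges incident to $u$ from at most one $C_i$. Your version is in fact slightly more careful than the paper's one-line lower bound (you justify why a polygon through $u$ cannot straddle two petals, and you pick explicit witness edges $e_i$), but the underlying argument is the same.
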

\vspace{-0.4cm}\begin{proof}
The clique graph of $\Gamma$ is the complete graph with $\ell$ vertices. Hence, by Proposition~\ref{prop:upper}, $\ell$ is an upper bound for $T(\Gamma)$. On the other hand, it is not possible to cover the edge set with less than $\ell$ tesselations, since each tessellation cannot cover the edges of more than one maximal clique $C_i$.
\end{proof}

The tightness of the lower bound is revealed by the following class of graphs.

\begin{definition}
The \textbf{$(3,n)$-extended wheel graph} $E_{3,n}$ for $n\ge 2$ is defined by adding to the wheel graph $W_{3n}$ the following edges: $\{3i,3j\},\{3i+1,3j+1\}$ and $\{3i+2,3j+2\}$, for $0\le i<j< n$.
\end{definition}

\begin{proposition}\label{lemma_a}
 The maximal cliques of $E_{3,n}$ are $3$-cliques or $(n+1)$-cliques. The number of maximal cliques is $3n+3$. The maximal cliques are the $3$-cliques of the spanning wheel $W_{3n}$, plus three new $(n+1)$-cliques. All maximal cliques share the vertex with label $3n$.
\end{proposition}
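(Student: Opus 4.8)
The plan is to exploit the fact that the apex vertex $3n$ of the underlying wheel $W_{3n}$ is universal in $E_{3,n}$: it is adjacent to all of $0,\ldots,3n-1$ already in $W_{3n}$, and the extension adds no edge incident to $3n$. First I would record the full adjacency structure of $E_{3,n}$: two non-apex vertices $a,b$ are adjacent if and only if they are consecutive on the rim cycle $C_{3n}$ (including the wrap-around pair $\{0,3n-1\}$) or they lie in the same residue class modulo $3$. Since consecutive rim vertices differ by $1$, they always have different residues, so the rim edges and the residue edges are disjoint, and each residue class $R_r=\{3i+r : 0\le i<n\}$ induces a complete graph $K_n$.

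Because $3n$ is universal, every maximal clique of $E_{3,n}$ must contain it, for otherwise we could enlarge the clique by adjoining $3n$. Hence I would reduce the whole problem to determining the maximal cliques of the subgraph $H$ induced on $\{0,\ldots,3n-1\}$: the map $C\mapsto C\setminus\{3n\}$ is a bijection between the maximal cliques of $E_{3,n}$ and those of $H$, since adjoining the apex to a maximal clique of $H$ yields a maximal clique of $E_{3,n}$ and conversely. This is the conceptual core of the argument and removes any need for case analysis on the apex.

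The main technical step, which I expect to be the only real obstacle, is to show that in $H$ no vertex is adjacent to both endpoints of a rim edge. I would prove this by a short case analysis on whether a purported common neighbour $w$ of a rim edge $\{a,a+1\}$ is rim-adjacent or residue-adjacent to each endpoint, using that $a$ and $a+1$ have different residues (and treating the wrap-around edge $\{0,3n-1\}$, whose endpoints have residues $0$ and $2$, separately). The congruences rule out every subcase once $3n\ge 6$, i.e. $n\ge 2$. Two consequences follow at once: each rim edge is triangle-free and hence a maximal $2$-clique of $H$; and any clique of $H$ of size at least $3$ contains no rim edge, so all its pairs are residue-adjacent and it lies entirely inside one residue class.

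Finally I would classify the maximal cliques of $H$. Any maximal clique containing a triangle lies inside a single residue class and therefore, by maximality, equals the whole $K_n$ on that class; and each residue class is itself maximal, since a vertex outside it is rim-adjacent to at most one of its members (the two rim-neighbours of any vertex carry different residues). The remaining maximal cliques are triangle-free edges, namely the $3n$ rim edges always, together with the three residue classes only in the degenerate case $n=2$, where they too have size $2$. Thus for every $n\ge 2$ the maximal cliques of $H$ are exactly the three residue classes and the $3n$ rim edges. Re-adjoining the apex turns the three residue cliques into three $(n+1)$-cliques and the $3n$ rim edges into the $3n$ triangles of the spanning wheel $W_{3n}$, all sharing the vertex $3n$, giving $3n+3$ maximal cliques of sizes $n+1$ and $3$, exactly as claimed.
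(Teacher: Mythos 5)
Your proof is correct, and it is organized quite differently from the paper's. The paper works directly in $E_{3,n}$: it verifies that each wheel triangle $\{i,i+1,3n\}$ is a maximal clique, verifies that each of the three sets $\{r,r+3,\ldots,3n-3+r,3n\}$ is a maximal $(n+1)$-clique, and then simply asserts that ``there are no other maximal cliques.'' You instead exploit the universal apex to set up a bijection $C\mapsto C\setminus\{3n\}$ between maximal cliques of $E_{3,n}$ and maximal cliques of the rim graph $H$, and then genuinely classify the maximal cliques of $H$ via your key lemma that no vertex is adjacent to both endpoints of a rim edge. This buys two things the paper does not deliver. First, the exhaustiveness step becomes a proved classification rather than an unargued assertion: your lemma forces every clique of size at least $3$ into a single residue class, so nothing outside the two claimed families can occur, and you even handle the degenerate case $n=2$ (where the residue classes are themselves $2$-cliques) explicitly. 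Second, your lemma quietly repairs a flaw in the paper's argument: the paper justifies maximality of the wheel triangles by claiming that the rim $\{i : 0\le i<3n\}$ ``contains no $3$-clique,'' which is false as stated for $n\ge 3$ (e.g., $\{0,3,6\}$ is a triangle of added edges); what is actually true and needed is exactly your statement that no rim edge lies in a triangle of $H$, i.e., the endpoints of a rim edge have no common neighbour among the rim vertices. The cost of your route is length and an extra layer of abstraction (the bijection via the universal vertex, which requires the small check that $C\setminus\{3n\}$ is nonempty and that maximality transfers both ways), whereas the paper's direct verification is shorter; but as a complete argument yours is the more rigorous of the two.
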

\begin{proof} 
Let $i$ be an index in the range $0\le i< 3n$ and the arithmetic with this index be performed modulo $3n$. Then, the set $\{i,i+1,3n\}$ is a maximal $3$-clique because it induces a triangle of the spanning wheel graph and in $E_{3,n}$ the set of vertices $\{i, 0 \leq i < 3n \}$ contains no $3$-clique. Now consider the three sets of vertices $\{0,3,6,...,3n-3,3n\}$, $\{1,4,7,...,3n-2,3n\}$, and  $\{2,5,8,...,3n-1,3n\}$, each of them with cardinality $(n+1)$. We claim that each one is a maximal $(n+1)$-clique. Consider the set  $\{0,3,6,...,3n-3,3n\}$ (analogous for the other ones). All vertices in this set are adjacent because the edges are either $\{3i,3j\}$ for some $0\le i,j<n$ or $\{3i,3n\}$ for some $0\le i<n$. In the first case, these edges were added to $W_{3n}$ to define $E_{3,n}$, and in the second case the edges belong to the spanning wheel graph. If a new vertex is added, the new vertex must have the form $3i+1$ or $3i+2$ for some $0\le i<n$ and it will not be adjacent to all vertices of set  $\{0,3,6,...,3n-3,3n\}$. Then, this set is a maximal clique. There are three such $(n+1)$-maximal cliques and there are no other maximal cliques. Then, the total number of maximal cliques is $3n+3$ and all of them share the vertex with label $3n$.
\end{proof}


It follows from Proposition~\ref{lemma_a} that $K(E_{3,n})$ is the complete graph with $3n+3$ vertices, and hence Proposition~\ref{prop:upper} establishes the upper bound $T(E_{3,n})\le 3n+3$. We prove next that the actual value of $T(E_{3,n})$ is much smaller.

\begin{proposition}\label{lemma2}
$T(E_{3,n})=3$ for $n\ge 2$.
\end{proposition}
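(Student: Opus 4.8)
The plan is to establish the two bounds separately. For the lower bound $T(E_{3,n}) \ge 3$, I would invoke Proposition~\ref{lemma_a}, which gives that $K(E_{3,n})$ is the complete graph on $3n+3$ vertices; since $3n+3 \ge 9 > 2$ for $n \ge 2$, this clique graph is not $2$-colorable, so Proposition~\ref{prop:upper} immediately yields $T(E_{3,n}) \ge 3$. The substance of the argument is therefore the matching upper bound $T(E_{3,n}) \le 3$, which I would prove by exhibiting three explicit tessellations whose edge sets cover the edge set of $E_{3,n}$.

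The construction rests on organizing the rim vertices by residue modulo $3$. Write $A_r = \{\, 3i + r : 0 \le i < n \,\}$ for $r \in \{0,1,2\}$, so the three maximal $(n+1)$-cliques of Proposition~\ref{lemma_a} are $Q_r = A_r \cup \{3n\}$; note they pairwise share only the hub $3n$, which is exactly why no single tessellation can use more than one of them. I would also split the $3n$ rim-cycle edges into three perfect matchings of the rim, namely $M_0 = \{\{3i,3i+1\}\}$, $M_1 = \{\{3i+1,3i+2\}\}$, and $M_2 = \{\{3i+2,\,(3i+3)\bmod 3n\}\}$, each taken over $0 \le i < n$. A short check shows that $M_1$ is a perfect matching of $A_1 \cup A_2$, that $M_2$ is a perfect matching of $A_2 \cup A_0$ (its last edge being the wrap-around edge $\{3n-1,0\}$), and that $M_0$ is a perfect matching of $A_0 \cup A_1$, while $M_0 \cup M_1 \cup M_2$ is the whole rim cycle.

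Then I would define $\mathcal{T}_0 = \{Q_0\} \cup M_1$, $\mathcal{T}_1 = \{Q_1\} \cup M_2$, and $\mathcal{T}_2 = \{Q_2\} \cup M_0$. Each $\mathcal{T}_r$ is a genuine tessellation: its hub-polygon $Q_r$ occupies class $A_r$ together with $3n$, and the attached matching perfectly covers the remaining two classes by $K_2$ polygons, so the polygons are disjoint cliques covering every vertex. For coverage, $Q_r$ supplies all internal edges of the clique on $A_r$ and all spokes from $A_r$ to the hub, so across the three tessellations every within-class edge and every spoke is covered; and the three matchings $M_1, M_2, M_0$ together cover all rim-cycle edges. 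Hence the union $\bigcup_r \mathcal{E}(\mathcal{T}_r)$ equals the edge set of $E_{3,n}$, giving $T(E_{3,n}) \le 3$ and, with the lower bound, $T(E_{3,n}) = 3$.

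I expect the only delicate point to be the bookkeeping that makes the three matchings fit together: the construction works precisely because the rim has length $3n$, a multiple of $3$, so the residue classes alternate consistently around the cycle and the leftover pair of classes in each tessellation always admits a perfect matching drawn from existing cycle edges (including the wrap-around edge $\{3n-1,0\}$ in $M_2$). Verifying this, and checking that each $\mathcal{T}_r$ is a partition rather than merely an edge cover, is the main---though routine---obstacle; everything else follows from Propositions~\ref{prop:upper} and~\ref{lemma_a}.
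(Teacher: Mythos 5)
Your proof is correct and takes essentially the same approach as the paper: your tessellations $\mathcal{T}_0=\{Q_0\}\cup M_1$, $\mathcal{T}_1=\{Q_1\}\cup M_2$, $\mathcal{T}_2=\{Q_2\}\cup M_0$ are exactly the paper's three tessellations (the $(n+1)$-clique of one residue class together with a perfect matching of the other two classes, including the wrap-around edge), and the lower bound comes from the same $2$-colorability criterion via Propositions~\ref{prop:upper} and~\ref{lemma_a}. The only cosmetic difference is that you verify coverage by classifying edge types (within-class, spoke, rim), whereas the paper counts that the three disjoint tessellations cover $3n(n+3)/2$ edges, the full edge count of $E_{3,n}$.
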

\begin{proof} Let us show that $E_{3,n}$ is $3$-tessellable by describing explicitly three tessellations that cover the edges of $E_{3,n}$. The tessellations are the following ones: 
\begin{itemize}
\item[] ${\mathcal{T}}_0=\{\{0,3,6,...,3n-3,3n\},\{3i+1, 3i+2\}\textrm{ for }\,0\le i\le n-1\}$, 
\item[] ${\mathcal{T}}_1=\{\{1,4,7,...,3n-2,3n\},\{3i+2, 3i+3\}\textrm{ for }\,0\le i\le n-1\}$, 
\item[] ${\mathcal{T}}_2=\{\{2,5,8,...,3n-1,3n\},\{3i+3, 3i+4\}\textrm{ for }\,0\le i\le n-1\}$,
\end{itemize}
where the arithmetic with index $i$ is performed modulo $3n$. Let us show that ${\mathcal{T}}_0$ is a well defined tessellation (analogous for the other ones) by checking each item of the following list: (1) Each polygon in ${\mathcal{T}}_0$ must be a clique, (2) the polygons in ${\mathcal{T}}_0$ must be pairwise disjoint, and (3) the union of the polygons in ${\mathcal{T}}_0$ must be the vertex set. By using the proof of Proposition~\ref{lemma_a} and the fact that $\{3i+1, 3i+2\}$ is an edge of the spanning wheel, we check item~(1). Using that set $\{0,3,6,...,3n-3,3n\}$ is comprised of vertices that are multiple of 3 while no vertex in sets $\{3i+1, 3i+2\}$ for $0\le i\le n-1$ is multiple of 3, we check item~(2). The union of the sets in ${\mathcal{T}}_0$ is the vertex set, and we check item~(3). Since no edge belongs to more than one tessellation and each tessellation covers $n\,(n+3)/2$ edges, the union ${\mathcal{E}}({\mathcal{T}}_0)\cup{\mathcal{E}}({\mathcal{T}}_1)\cup{\mathcal{E}}({\mathcal{T}}_2)$ covers $3n(n+3)/2$ edges, which is the number of edges of $E_{3,n}$. It is not possible to cover the edges of $E_{3,n}$ with less than three tessellations because the chromatic number of the clique graph of $E_{3,n}$ is larger than 2.  Then, $T\big(E_{3,n}\big)=3$ for $n\ge 2$.
\end{proof}


\vspace{-0.5cm}
\subsubsection*{$3$-tessellation is \NP-complete}

Deciding whether a graph is $k$-tessellable for $k\ge 3$ is NP-complete. To prove this statement, we use the class of triangle-free graphs with maximum vertex degree 3 because the 3-edge-coloring problem in this class is NP-complete~\cite{Kor97}.

\begin{theorem}
Deciding whether a graph is 3-tessellable is \NP-complete.
\end{theorem}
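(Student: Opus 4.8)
The plan is to prove both membership in NP and NP-hardness. For membership, I would observe that a natural certificate is a set of three tessellations $\mathcal{T}_1, \mathcal{T}_2, \mathcal{T}_3$; verifying that each $\mathcal{T}_i$ is a partition of the vertex set into cliques and that $\bigcup_{i=1}^{3}\mathcal{E}(\mathcal{T}_i)$ equals the edge set of $\Gamma$ can be done in polynomial time, so the $3$-tessellation problem lies in NP.

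For hardness, I would reduce from the $3$-edge-coloring problem restricted to triangle-free graphs of maximum degree $3$, which is NP-complete by Ref.~\cite{Kor97}. The key structural observation is that in a triangle-free graph the only cliques are single vertices and single edges. Hence every tessellation $\mathcal{T}$ of such a graph consists of singleton polygons and edge polygons, its edge set $\mathcal{E}(\mathcal{T})$ is precisely the set of edge polygons, and this set is a matching; conversely, any matching $M$ extends to a tessellation by placing every unmatched vertex in its own singleton polygon, yielding $\mathcal{E}(\mathcal{T}) = M$.

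Using this correspondence, I would establish that for a triangle-free graph $\Gamma$ one has $T(\Gamma) = \chi'(\Gamma)$. Indeed, a tessellation cover of size $k$ is exactly a family of $k$ matchings whose union is $E(\Gamma)$, and assigning each edge to a single matching containing it refines such a cover into a proper $k$-edge-coloring; the converse is immediate, since a proper $k$-edge-coloring is already a partition of $E(\Gamma)$ into $k$ matchings, each of which extends to a tessellation as above. Consequently, a triangle-free graph is $3$-tessellable if and only if it is $3$-edge-colorable, so the identity map is the desired polynomial-time reduction, and it visibly preserves the restrictions (triangle-free, maximum degree~$3$) under which Ref.~\cite{Kor97} guarantees NP-completeness of the source problem.

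The main obstacle I anticipate is not the reduction itself---which becomes the identity map once the correspondence is in place---but rather carefully justifying the clique structure of triangle-free graphs and handling the distinction between a \emph{cover} by matchings (as in a tessellation cover) and a \emph{partition} into matchings (as in an edge-coloring). I would take care to verify that refining a size-$k$ cover by $k$ matchings into a genuine proper $k$-edge-coloring never leaves an edge uncovered and never produces a non-matching color class; this is exactly the point at which the equality $T(\Gamma) = \chi'(\Gamma)$ is earned, and everything else follows routinely.
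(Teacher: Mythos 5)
Your proposal is correct and follows essentially the same route as the paper: the paper's (very terse) proof likewise reduces from $3$-edge-coloring of triangle-free graphs of maximum degree $3$ via the observation that in a triangle-free graph every tessellation's edge set is a matching, so a $3$-tessellation cover is exactly three matchings covering the edge set. Your additional care with NP membership and with refining a cover by matchings into a proper edge-coloring fills in details the paper leaves implicit, but introduces no new idea beyond its argument.
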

\vspace{-0.4cm}\begin{proof}
In a triangle-free graph $\Gamma$, a $3$-tessellation corresponds to 3 matchings of $\Gamma$ covering its edge set, and so define a 3-edge-coloring of $\Gamma$.
\end{proof}

\vspace{-0.4cm}
\subsubsection*{Final remarks}

We have shown that the lower bound of $T(\Gamma)$ does not depend on $\chi\big(K(\Gamma)\big)$, at least when $\chi\big(K(\Gamma)\big)$ is multiple of 3. Besides classes $E_{3,n}$ and the windmill graphs for which we were able to establish tessellation numbers satisfying $T(\Gamma) = 3$ and $T(\Gamma) = \chi(K(\Gamma))$, respectively, we were able to define additional infinite classes of graphs satisfying $\frac{T(\Gamma)}{\chi(K(\Gamma))} = \frac{1}{2}, \frac{1}{3}$, $\frac{1}{4}$, $T(\Gamma)=\sqrt{\chi(K(\Gamma))}$, and further ones obtained by extending $E_{3,n}$ into $E_{k,n}$ for $k\ge 4$.

For every graph $\Gamma$, is there a minimum tessellation cover such that every tessellation contains a polygon which is a maximal clique of $\Gamma$?

\vspace{-0.4cm}
\bibliographystyle{endm}   



\newpage

\section*{Appendix: Example of some class of graphs $\Gamma$ and their tessellation number $T(\Gamma)$}

We should mention that in order to obtain some of the infinite classes of graphs presented below, we have used the following computational approach. Let ${\mathcal{S}}$ be the set of all tessellations $\{{\mathcal{T}}_1,...,{\mathcal{T}}_r\}$ of a graph $\Gamma$. Define the set ${\mathcal{E}}({\mathcal{S}})=\{{\mathcal{E}}({\mathcal{T}}_1),...,{\mathcal{E}}({\mathcal{T}}_r)\}$. A minimum tessellation cover of $\Gamma$ corresponds to a minimum set cover~\cite{GareyJohnson} for ${\mathcal{E}}({\mathcal{S}})$. This approach for finding the minimum tessellation cover is highly inefficient mainly because the cardinality of ${\mathcal{S}}$ is too large. It is possible to reduce the size of ${\mathcal{S}}$ by selecting tessellations that contain at least one maximal clique. This algorithm is not exhaustive but is efficient enough when we use greedy algorithms for the set cover problem.

\subsubsection*{Class with $T(\Gamma) = \chi(K(\Gamma))$.}

\begin{figure}[!h]
\centering
 \includegraphics[width=5cm]{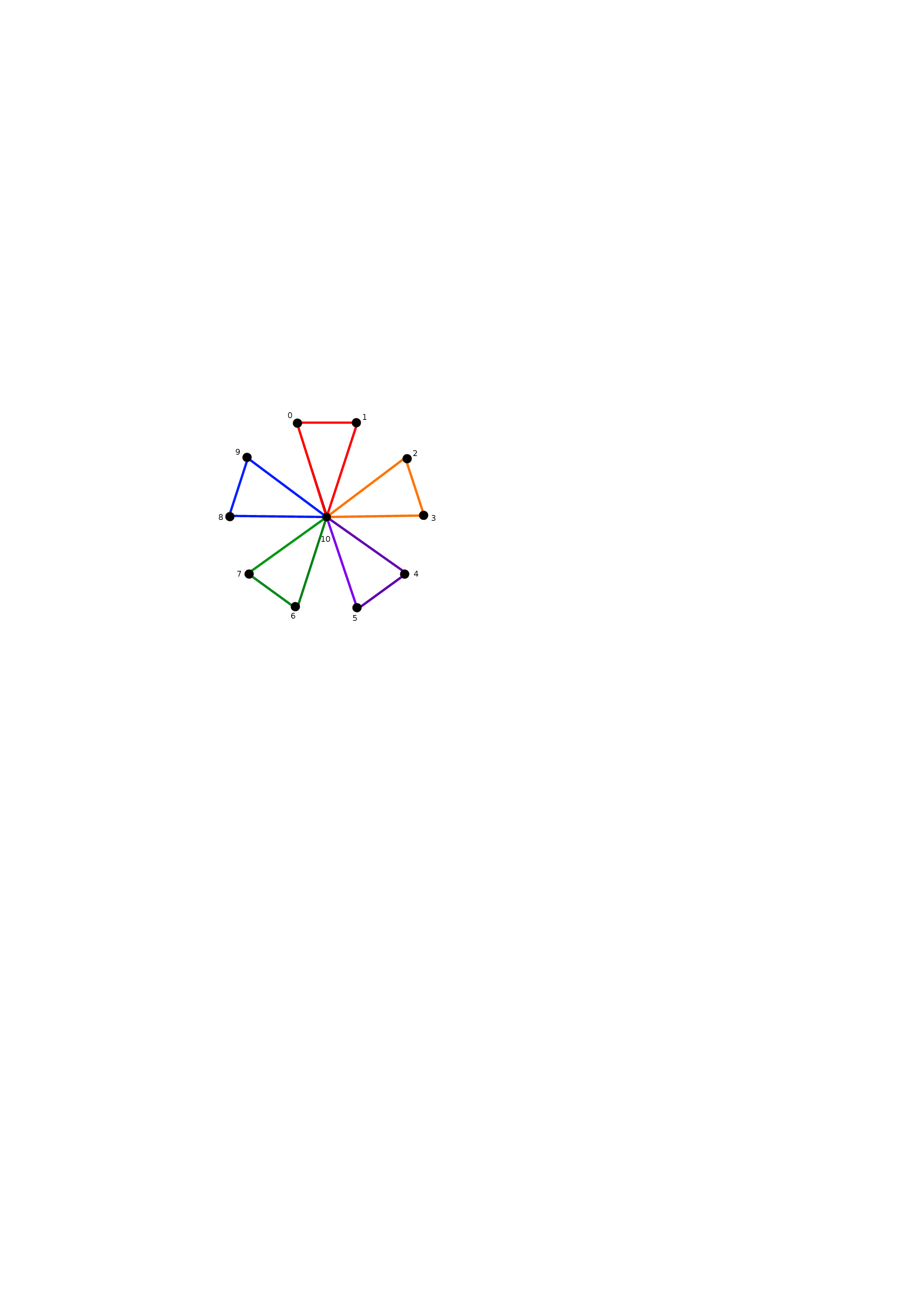}
 \caption{\label{fig:n} The class of windmill graphs considered in Proposition~\ref{prop:windmill} is a tight graph class with respect to the upper bound of Proposition~\ref{prop:upper}. The windmill $\Gamma$ has $5$ triangles, its clique graph $K(\Gamma)$ is the complete graph with $5$ vertices, and the tessellation number $T(\Gamma)$ is $5$.}
\end{figure}

\newpage

\subsubsection*{Class with $\frac{T(\Gamma)}{\chi(K(\Gamma))} = \frac{1}{2}$.}

\begin{figure}[!h]
\centering
 \includegraphics[width=5cm]{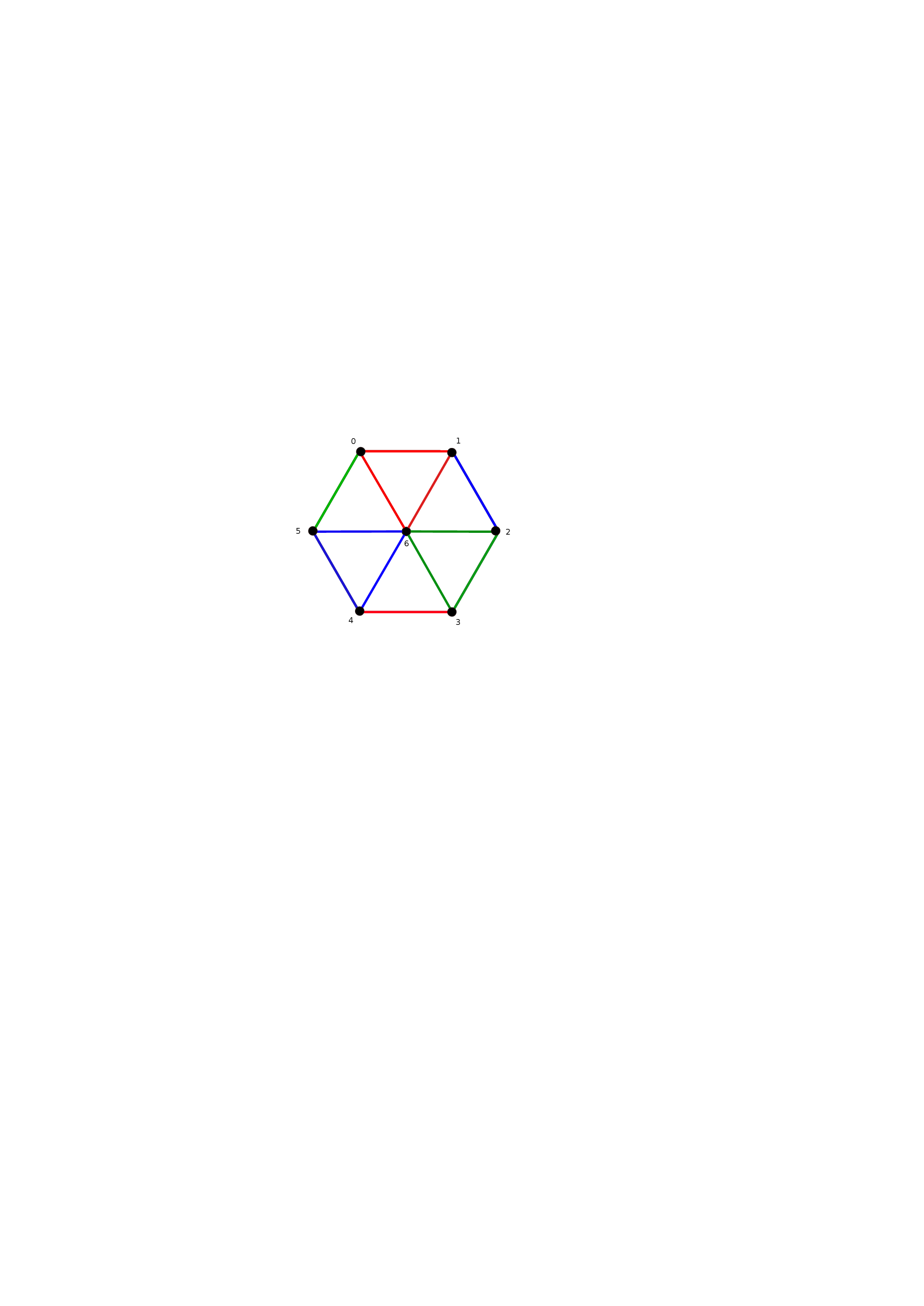}
 \caption{\label{fig:n2} The class of wheel graphs. In the example, the wheel $\Gamma=W_6$ has $7$ vertices, its clique graph $K(\Gamma)$ is the complete graph with $6$ vertices, and the tessellation number $T(\Gamma)$ is $3$. A $3$-tessellation cover is highlighted by using $3$ colors.}
\end{figure}

\subsubsection*{Class with $\frac{T(\Gamma)}{\chi(K(\Gamma))} = \frac{1}{3}$.}

\begin{figure}[!h]
\centering
 \includegraphics[width=5cm]{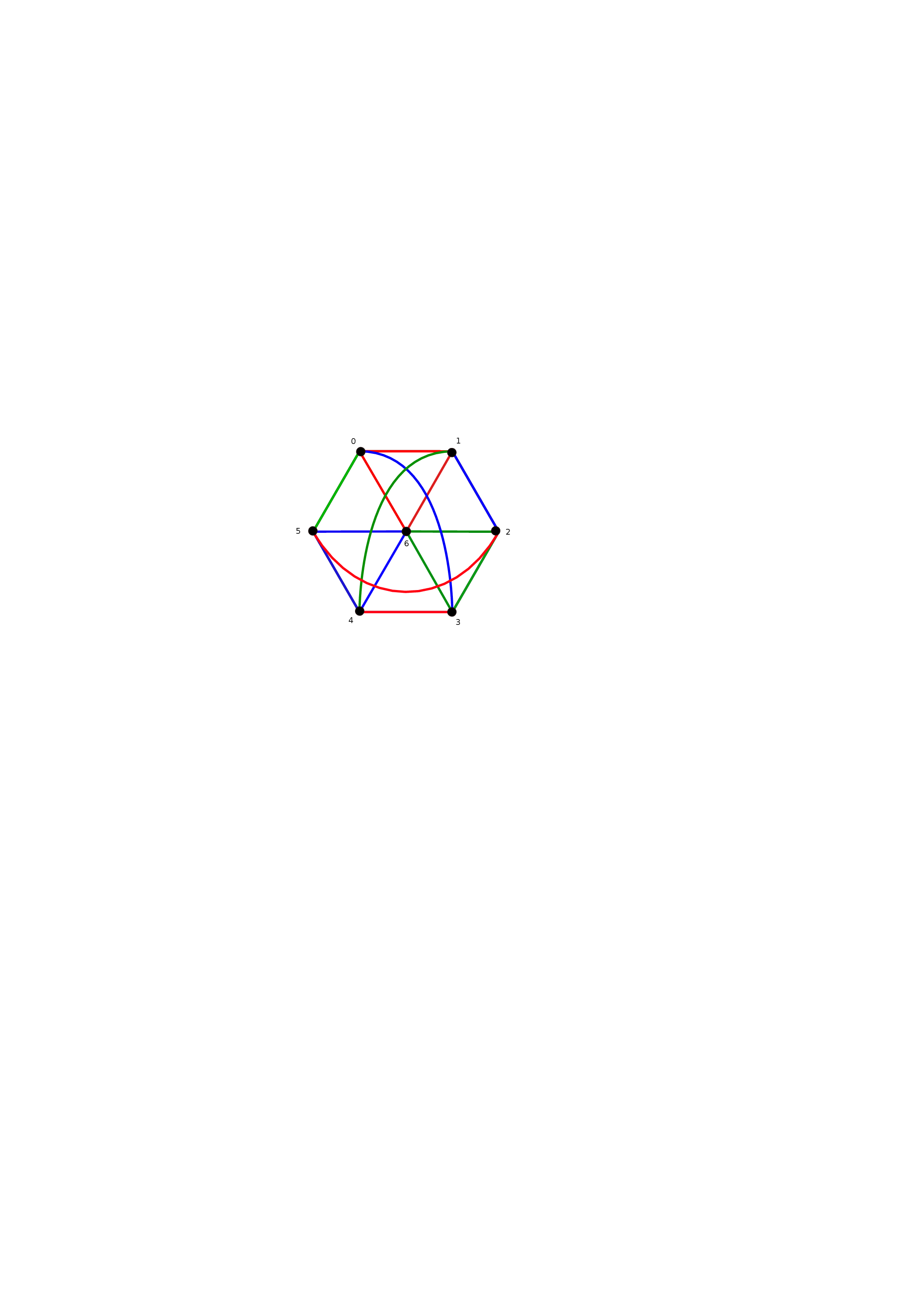}
 \caption{\label{fig:n3} In the example, the graph $\Gamma$ contains the wheel $W_6$ as a spanning subgraph, its clique graph $K(\Gamma)$ is the complete graph with $9$ vertices, and the tessellation number $T(\Gamma)$ is $3$. A $3$-tessellation cover is highlighted by using 3 colors.}
\end{figure}

\newpage

\subsubsection*{Class with $\frac{T(\Gamma)}{\chi(K(\Gamma))} = \frac{1}{4}$.}

\begin{figure}[h]
\centering
 \includegraphics[width=12cm]{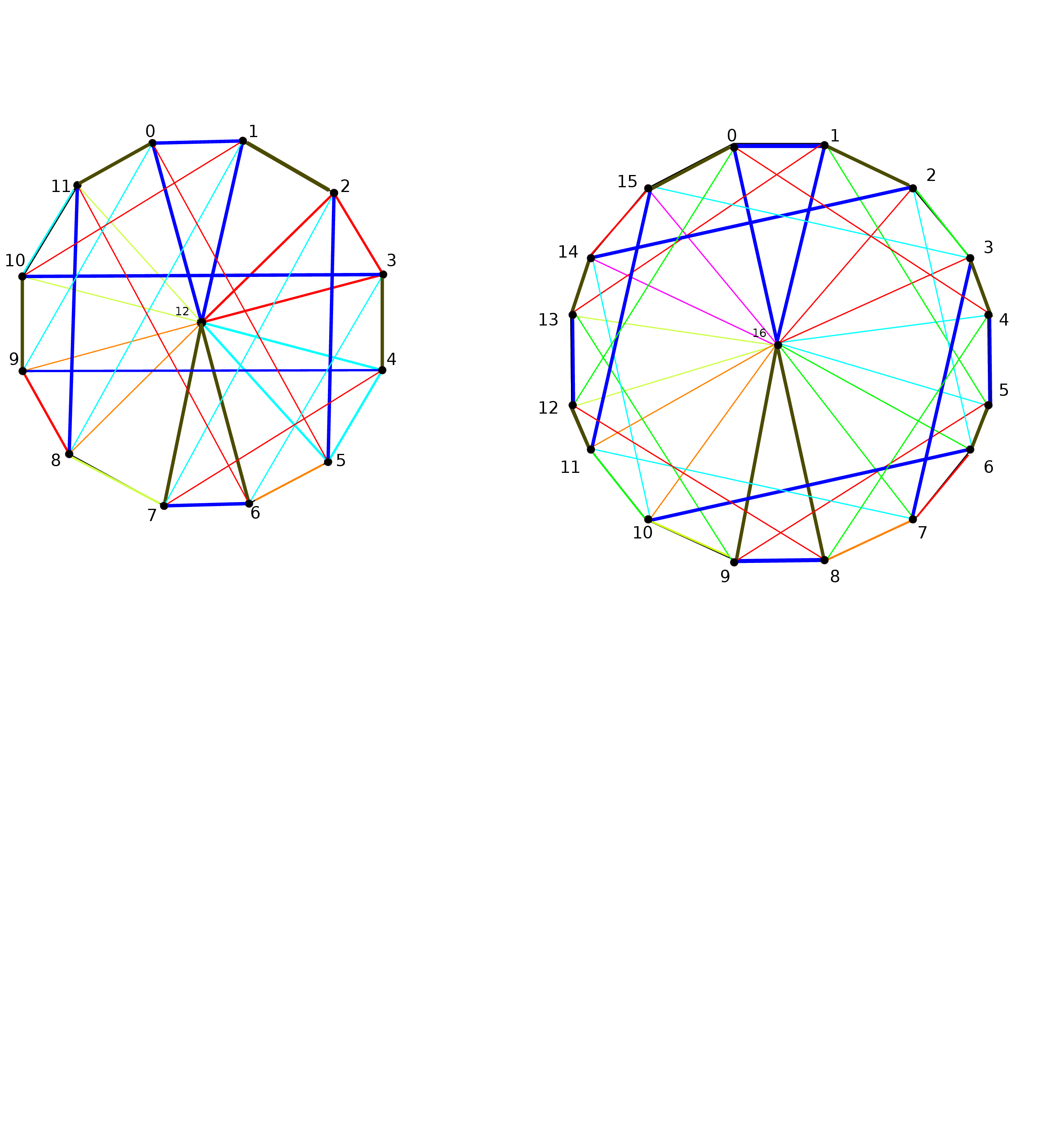}
 \caption{\label{fig:n4} In the examples, the graphs $\Gamma$ contain respectively the wheels $W_{12}$ and $W_{16}$ as a spanning subgraph. The clique graphs $K(\Gamma)$ are respectively the complete graphs $K_{24}$ and $K_{32}$, and the tessellation numbers $T(\Gamma)$ are respectively $6$ and $8$. The corresponding minimum tessellations are highlighted.}
\end{figure}


\subsubsection*{Class with $T(\Gamma) = \sqrt{\chi(K(\Gamma))}$.}

\begin{figure}[!h]
\centering
 \includegraphics[width=5cm]{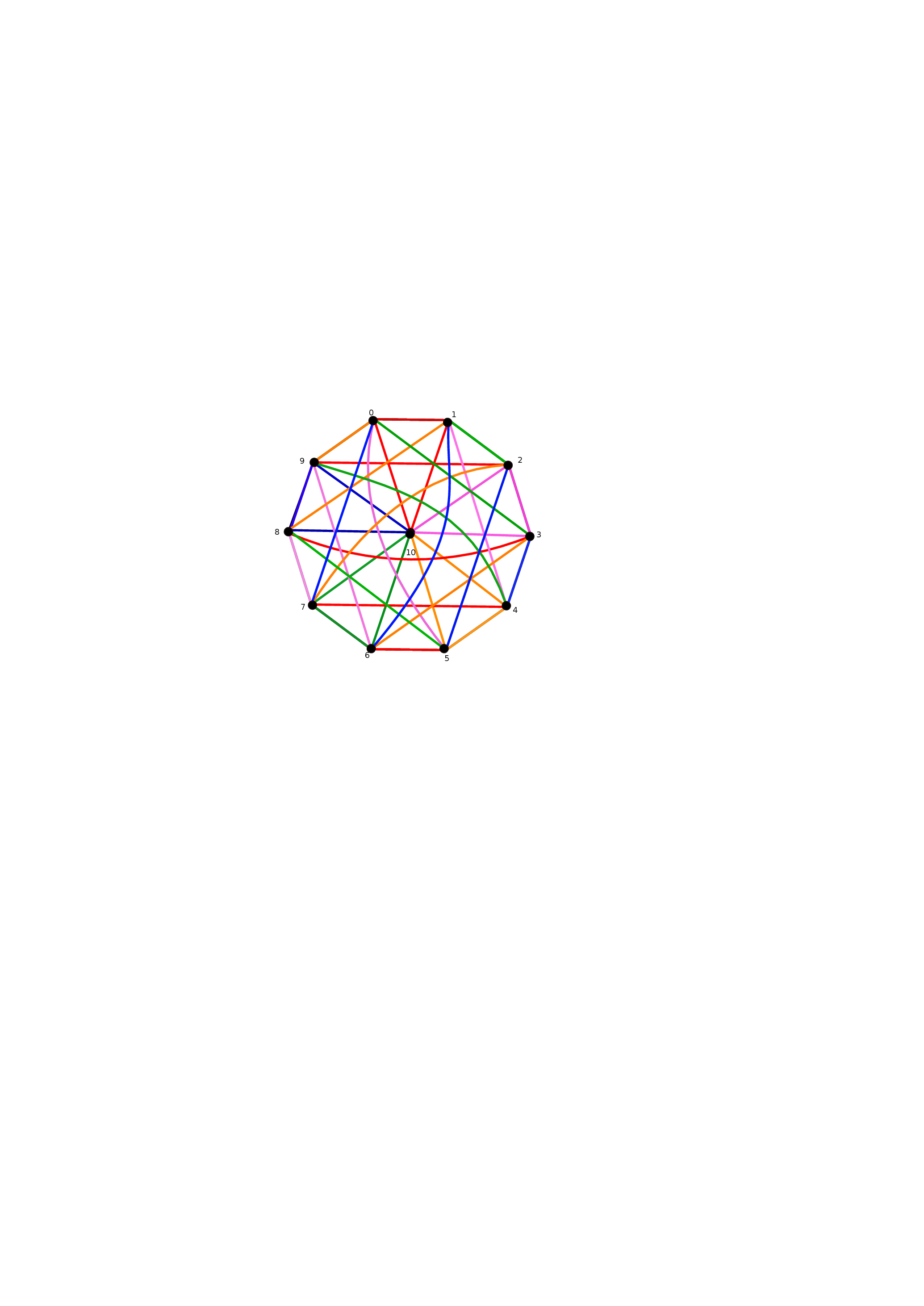}
 \caption{\label{fig:n3} In the example, the graph $\Gamma$ contains the wheel $W_{10}$ as a spanning subgraph, its clique graph $K(\Gamma)$ is the complete graph with $25$ vertices, and the tessellation number $T(\Gamma)$ is $5$. A $5$-tessellation cover is highlighted by using $5$ colors.}
 
\end{figure}

\newpage

\subsubsection*{Graph with $T(\Gamma) < \sqrt{\chi(K(\Gamma))}$.}

\begin{figure}[!h]
\centering
 
 \begin{subfigure}{0.4\textwidth}
   \includegraphics[width=4.5cm]{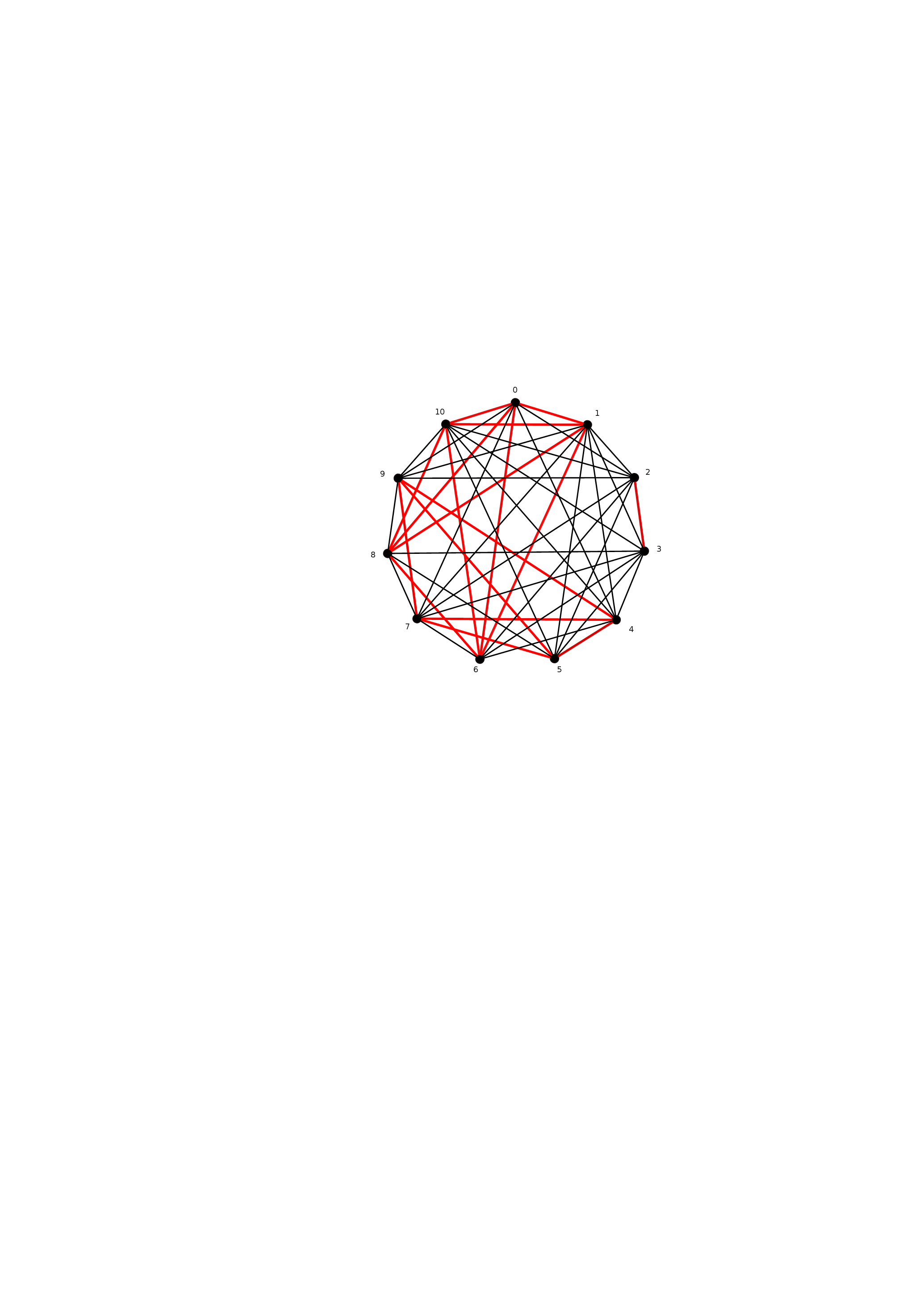}
 \end{subfigure}
\begin{subfigure}{0.4\textwidth}
   \includegraphics[width=4.5cm]{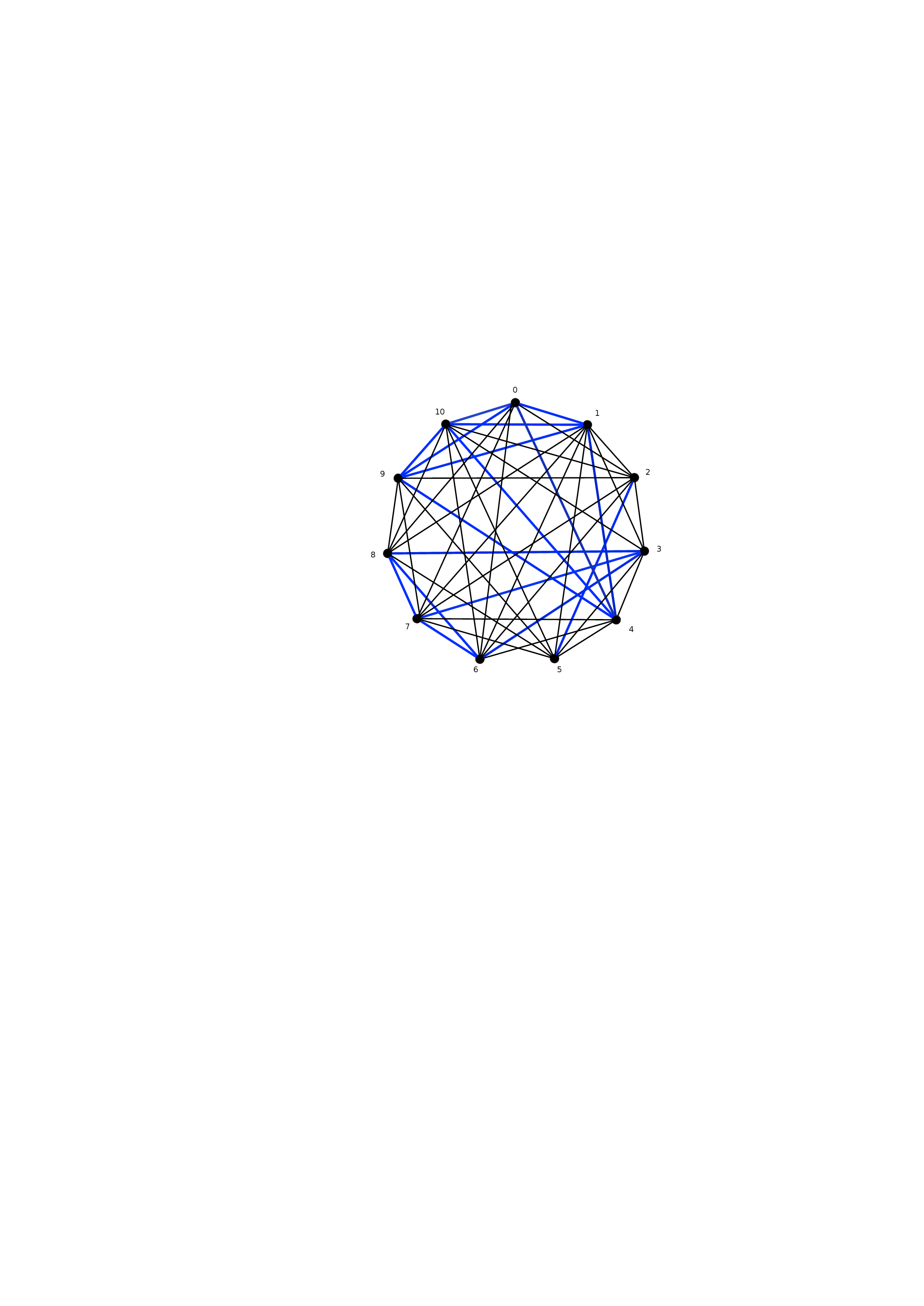}
 \end{subfigure}
 \begin{subfigure}{0.4\textwidth}
   \includegraphics[width=4.5cm]{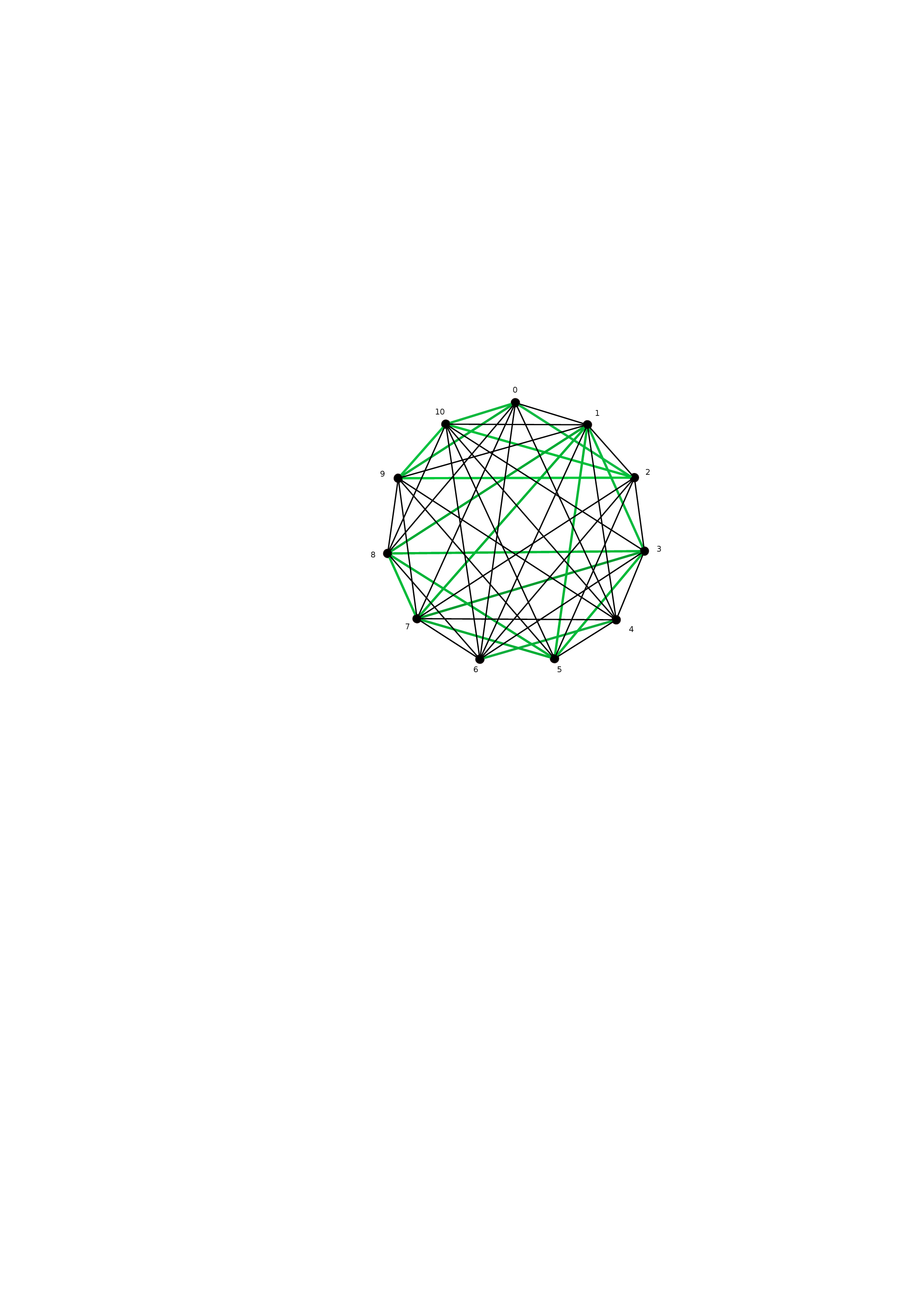}
 \end{subfigure}
 \begin{subfigure}{0.4\textwidth}
   \includegraphics[width=4.5cm]{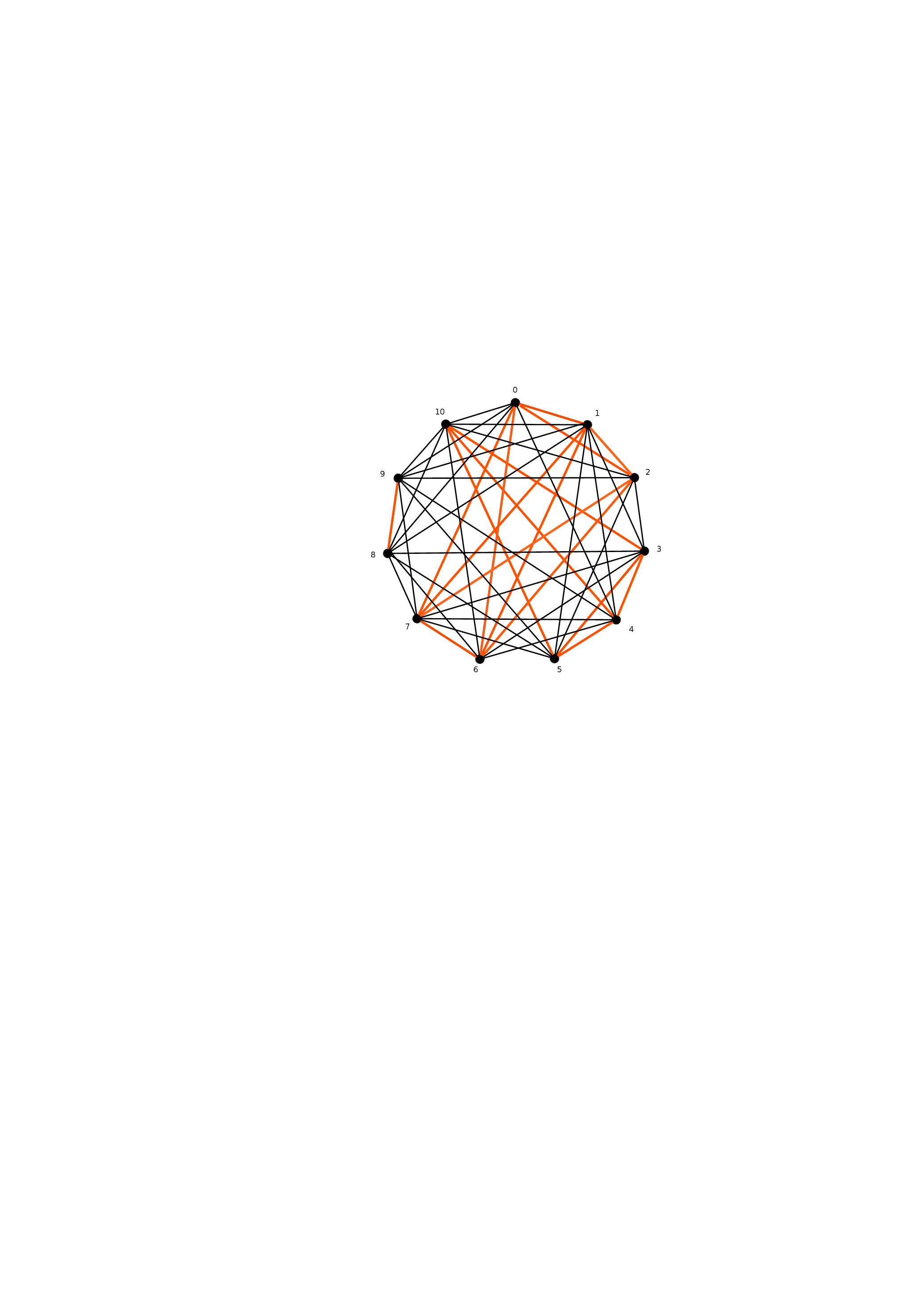}
 \end{subfigure}
 
 \caption{\label{fig:sqrtMenor} Tessellations of a graph $\Gamma$ with $T(\Gamma)=4$ and $\chi(K(\Gamma)) = 30$.}
\end{figure}

\subsubsection*{Class $E_{3,n}$ with $T(\Gamma)=3$ and $\chi(K(\Gamma)) = 3n+3$.}

\begin{figure}[!h]
\centering
 \includegraphics[width=5cm]{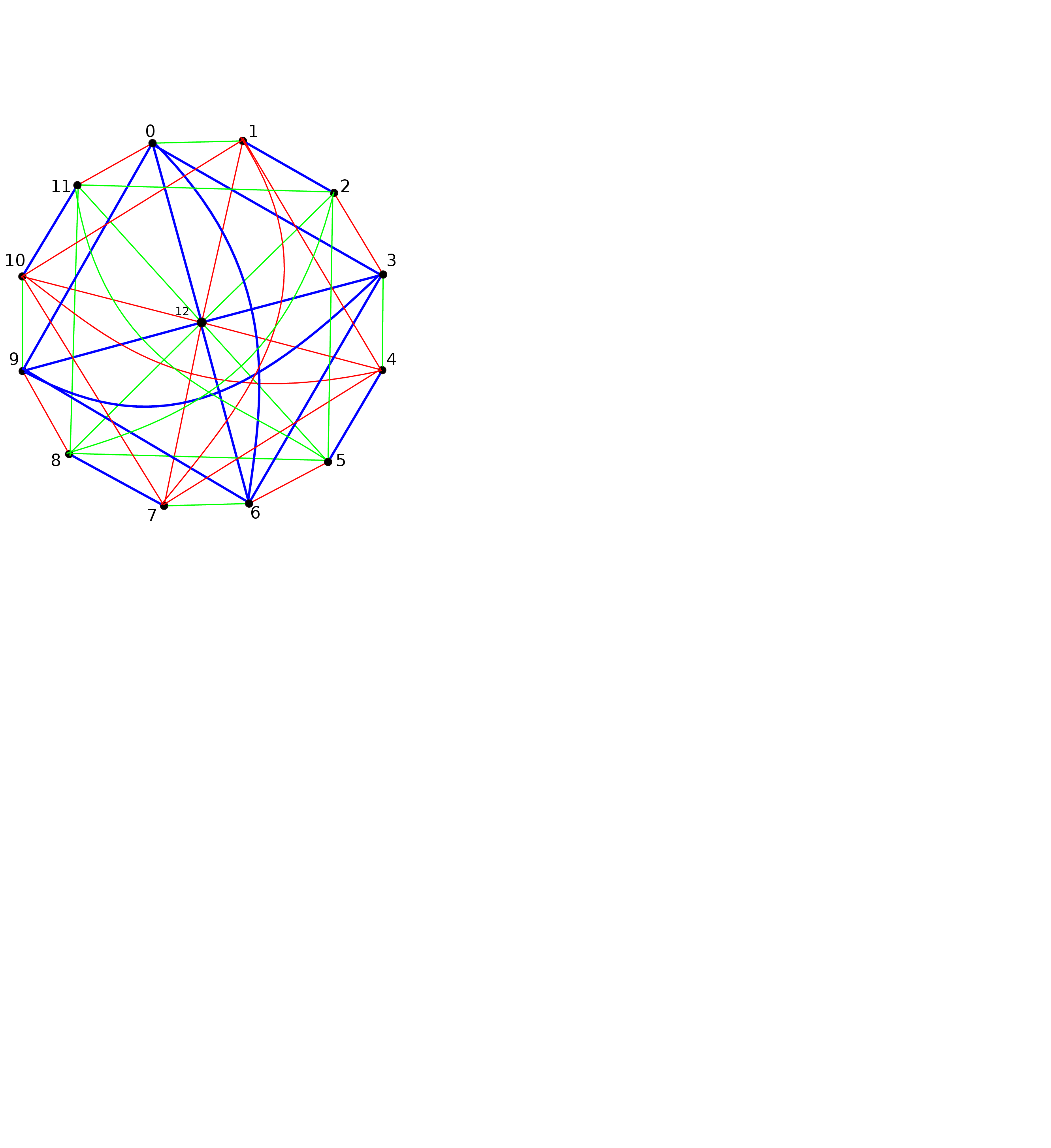}
 \caption{\label{fig:3n} The class of extended wheel graphs considered in Proposition~\ref{lemma2}. The extended wheel graph $\Gamma = E_{3,4}$ has three $5$-cliques and twelve $3$-cliques, hence its clique graph $K(\Gamma)$ is the complete graph with $15$ vertices.}
 
\end{figure}

\end{document}